\documentclass[12pt,a4paper]{article}
\usepackage[T1]{fontenc}
\usepackage[utf8]{inputenc}
\usepackage{lmodern,mathptmx,amsmath,amssymb,amsthm,mathtools,graphicx,booktabs}
\usepackage[a4paper,margin=2cm]{geometry}
\usepackage[numbers,square]{natbib}
\usepackage{hyperref}
\usepackage{microtype}
\hypersetup{pdftitle={Quadratically Weighted Exponential--Gamma Mixtures: Properties, Hazard Behavior, and Statistical Inference},pdfauthor={Lalita Singh and Rohit Patna}}
\numberwithin{equation}{section}
\newtheorem{theorem}{Theorem}
\newtheorem{proposition}[theorem]{Proposition}
\newtheorem{corollary}[theorem]{Corollary}

\newcommand{\E}{\mathbb E}
\newcommand{\Prob}{\mathbb P}

\newcommand{\Exp}{\operatorname{Exp}}
\newcommand{\Gam}{\operatorname{Gamma}}
\newcommand{\dd}{\mathrm d}

\title{The LR3 Lifetime Distribution: Shape, Reliability, Inference, and Rate--Scale Distinction}
\author{%
  Lalita Singh\\
  Agra College, Agra
  \and
  Rohit Patawa\thanks{Corresponding author: mail}\\
  Agra College, Agra
}
\date{}
\begin{document}
\maketitle
\begin{abstract}
\setlength{\emergencystretch}{1em}
We study the LR3 one-parameter lifetime distribution obtained by mixing an exponential component of rate $\theta^2$ and a gamma component with shape four and rate $\theta/2$. The exponential mixing probability is $\theta^2/(1+\theta^2)$ for $\theta>0$. We derive the density, survival function, moments and transforms, truncated moments, quantiles, order statistics, residual life, and stress--strength reliability. Identifiability follows from the density at the origin, and strict monotonicity of the mean yields a unique moment estimator. A complete density-turning-point criterion and analytic initial and tail hazard expansions clarify the possible shapes. The component rates cross at one half. The hazard decreases initially and eventually increases when the gamma rate is smaller than or equal to the exponential rate. We also analyze a gamma scale version whose mean is nonmonotone. We develop likelihood estimation, observed information, constrained expectation--maximization, spacing estimation, and inference under regularity conditions, and state the likelihood for independent right censoring. A reproducible Monte Carlo study with eight settings and 4,000 generated samples compares likelihood and moment estimation and evaluates log-scale confidence intervals. The analysis distinguishes mathematical properties from empirical validation and explains the restrictions associated with a fixed working time unit.
\end{abstract}
\noindent\textbf{Keywords:} finite mixture; gamma distribution; lifetime model; nonmonotone hazard; maximum likelihood; moment estimation.\\
\textbf{Mathematics Subject Classification:} 62E10, 62F10, 62N05.

\section{Introduction}
The statistical analysis of positive continuous observations is central to the study of lifetimes, waiting times, and durations. In reliability engineering, a distribution describes the operating time of a component before failure; in survival analysis, it describes the time until an event of interest. In both settings, quantities such as the survival probability, failure rate, and expected remaining lifetime provide information that cannot be summarized by the mean alone. The choice of a probability model therefore affects both the interpretation of a duration mechanism and the uncertainty attached to estimated reliability measures. The survival-analysis framework of \citet{cox1984} provides the setting for these distributional and inferential questions.

A useful lifetime model should combine an interpretable structure with enough flexibility to represent the features relevant to the application. The exponential distribution is attractive because of its simple likelihood, memoryless property, and constant hazard rate. This simplicity can become restrictive when the instantaneous failure rate changes with age. The gamma distribution provides a different mechanism: with a positive integer shape, it can describe the sum of independent exponential waiting times with a common rate. Such a stage-based construction produces a duration distribution whose behavior near the origin differs from that of an exponential lifetime. Combining these two components allows both mechanisms to contribute to a population distribution while retaining the positive support and tractable component formulas.

Finite mixture models offer a general way to represent heterogeneity through a weighted combination of component distributions. The component indicators may correspond to unobserved population groups or may serve as a mathematical representation of a more complicated distributional shape. In either interpretation, the observed density is obtained by averaging over the latent component assignment. As reviewed by \citet{mclachlan2019}, mixture models have a well-developed theory and a broad range of statistical uses. Their flexibility comes with inferential questions concerning parameter identification, likelihood optimization, and the interpretation of the fitted components. A parsimonious mixture is consequently useful only when its constraints are understood alongside its distributional properties.

Exponential--gamma constructions have an established place in the literature on lifetime distributions. The Lindley distribution, originating in \citet{lindley1958}, provides a classical one-parameter example based on exponential and shape-two gamma components with a common rate. \citet{ghitany2008} developed its mathematical and reliability properties, including moments, failure rate, and mean residual life. Subsequent work has considered several routes to extending this construction. For example, \citet{ghitany2013} introduced the two-parameter power Lindley distribution and studied its distributional characteristics, likelihood estimation, and finite-sample performance. These developments illustrate how a tractable baseline model can be modified to produce additional shape variation and associated inferential procedures.

Related finite mixtures use other gamma shapes and parameter-dependent mixing probabilities. \citet{sen2016} introduced the xgamma distribution as a particular mixture of an exponential distribution and a shape-three gamma distribution, and investigated moment and likelihood estimation together with survival properties. The Sujatha distribution of \citet{shanker2016} provides another one-parameter lifetime construction based on exponential and gamma components. These models demonstrate that changing the component structure and the mixing rule can yield distinct distributions even when the underlying building blocks are familiar. Their properties also make clear the importance of analyzing the resulting mixture directly: the shape of a component density or hazard cannot, by itself, determine the shape of the population mixture.

Allowing different component rates provides another way to distinguish duration mechanisms. In a closely related study, \citet{sakthivel2023} considered an exponential--gamma mixture with a shape-three gamma component and separate component-rate parameters. An unrestricted specification permits the mixture probability and the two rates to vary independently. A constrained specification instead links these quantities, reducing the number of parameters but imposing relationships among the component frequencies and durations. The relevant question is therefore not simply whether a mixture can be formed, but what distributional and inferential consequences follow from a particular set of constraints.

The LR3 model studied here uses exponential rate $a(\theta)=\theta^2$, gamma rate $b(\theta)=\theta/2$, and exponential mixing probability $w(\theta)=\theta^2/(1+\theta^2)$. Its component mean durations are $1/\theta^2$ and $8/\theta$. The same parameter controls mixture composition and both duration scales, with distinct linear and quadratic relationships. We use ``LR3'' as a working model label, rather than as an abbreviation for a likelihood ratio. A possible four-stage interpretation of the gamma component is a modeling hypothesis, not an empirical conclusion.

The component rates cross at $\theta=1/2$, whereas their mean durations coincide at $\theta=1/8$. For $\theta<1/2$ the exponential rate is smaller and governs the tail. For $\theta\geq1/2$ the gamma component governs the tail, including equality because its survival has a cubic polynomial factor. The hazard decreases initially for all positive parameters, and in the gamma-dominated regime it eventually increases. These statements identify endpoint behavior without claiming a complete count of hazard turning points.

LR3 also highlights a consequential notation issue. A gamma second parameter can denote either rate or scale, depending on convention. Treating $\theta/2$ as a scale replaces the rate by $2/\theta$ and produces a different model, here denoted LR3-S. Its mean is not monotone, so the unique moment-estimation argument for the main rate model does not carry over. We consistently use the shape--rate convention in the main text and analyze the alternative in Appendix~\ref{app:scale}.

The parameter restriction also calls for careful estimation. A scalar parameterization reduces optimization dimension but does not automatically imply a unique likelihood maximizer. We prove identifiability and strict decrease of the mean, giving a unique moment estimator for every positive sample mean. We derive the likelihood score and observed information without treating unobserved component allocations as known. A constrained expectation--maximization scheme follows the framework of \citet{dempster1977}, while maximum product of spacings follows \citet{cheng1983}. Large-sample likelihood inference is established on compact parameter intervals, with the positivity of the parameter handled through log-scale confidence limits.

The contribution is a self-contained mathematical and computational treatment of this particular restriction of an established mixture construction. Explicit formulas connect the component assumptions to moments and reliability measures; a threshold theorem classifies density stationary points; and hazard expansions identify the behavior near zero and in the tail. A reproducible simulation compares estimation methods under correct specification. The numerical study concerns complete independent generated observations, and no claim of superiority on an observed lifetime dataset is made. Throughout the main text, gamma parameters denote shape and rate, and time is measured in a declared fixed working unit.

The paper is organized as follows. Section~\ref{sec:construction} defines the distribution and establishes validity and identifiability. Section~\ref{sec:moments} develops moments and related measures. Section~\ref{sec:shape} analyzes density and hazard shape, and Section~\ref{sec:reliability} derives reliability functionals. Section~\ref{sec:inference} presents estimation and inference. Section~\ref{sec:numerical} reports the simulation, and Section~\ref{sec:discussion} discusses limitations and extensions. The appendix supplies the exact simulation algorithm and computational details.

\section{Model construction and basic properties}\label{sec:construction}
Throughout, $x>0$ and $\theta>0$. Let $\Exp(a)$ have rate $a$, and let $\Gam(4,b)$ have shape four and rate $b$. Their densities are
\begin{equation}
 e_a(x)=a e^{-ax},\qquad g_b(x)=\frac{b^4x^3}{6}e^{-bx}.
\end{equation}
Define $w=w(\theta)=\theta^2/(1+\theta^2)$ and let $Z$ be Bernoulli with success probability $w$. Conditional on $Z$, take
\begin{equation}\label{eq:latent}
 X\mid Z=1\sim\Exp(a(\theta)),\qquad X\mid Z=0\sim\Gam(4,b(\theta)).
\end{equation}
Thus the common density is
\begin{equation}\label{eq:generalpdf}
 f(x;\theta)=w a e^{-ax}+(1-w)\frac{b^4x^3}{6}e^{-bx}.
\end{equation}
Hereafter, the auxiliary symbols always mean $a=\theta^2$ and $b=\theta/2$; they are not independently estimated parameters. The LR3 density is explicitly
\begin{equation}\label{eq:specificpdf}
 f(x;\theta)=\frac{\theta^4}{1+\theta^2}\left(e^{-\theta^2x}+\frac{x^3}{96}e^{-\theta x/2}\right),\qquad x>0.
\end{equation}

\begin{proposition}[Validity and survival function]\label{prop:valid}
LR3 is a probability distribution on $(0,\infty)$. Define
\begin{equation}
 P_3(y)=1+y+\frac{y^2}{2}+\frac{y^3}{6}.
\end{equation}
The survival function and distribution function are
\begin{equation}\label{eq:survival}
 S(x;\theta)=w e^{-ax}+(1-w)e^{-bx}P_3(bx),\qquad F(x;\theta)=1-S(x;\theta).
\end{equation}
For $x\leq0$, $F(x;\theta)=0$ and $S(x;\theta)=1$.
\end{proposition}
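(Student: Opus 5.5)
The plan has three steps. First, check that the explicit density in \eqref{eq:specificpdf} agrees with the general mixture form \eqref{eq:generalpdf} under $a=\theta^2$, $b=\theta/2$ and $w=\theta^2/(1+\theta^2)$. For the exponential term, $wa=\theta^4/(1+\theta^2)$. For the gamma term, $1-w=1/(1+\theta^2)$ and $b^4/6=\theta^4/96$, so the coefficient of $x^3e^{-\theta x/2}$ is $\theta^4/\{96(1+\theta^2)\}$. Both coefficients match. Second, establish validity by noting that $f$ is a convex combination of two probability densities on $(0,\infty)$. Since $0<w<1$ for every $\theta>0$ and $e_a,g_b$ are strictly positive there, $f(x;\theta)>0$. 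Linearity of the integral then gives $\int_0^\infty f=w\cdot1+(1-w)\cdot1=1$. Equivalently, $f$ is the marginal density of $X$ in the latent construction \eqref{eq:latent}, by the law of total probability over $Z$. Setting $F(x)=S(x)=$ the appropriate conventions for $x\le0$ just records that the support is $(0,\infty)$.

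For the survival function, write $S(x;\theta)=\Prob(X>x)=w\,\Prob(X>x\mid Z=1)+(1-w)\Prob(X>x\mid Z=0)$. The exponential term is immediately $e^{-ax}$. The only real computation is the $\Gam(4,b)$ tail, $\int_x^\infty \frac{b^4t^3}{6}e^{-bt}\,\dd t$. Substituting $u=bt$ turns this into $\frac16\int_{bx}^\infty u^3e^{-u}\,\dd u$, so I need the identity
\begin{equation*}
\frac{1}{6}\int_y^\infty u^3e^{-u}\,\dd u=e^{-y}P_3(y),\qquad y\ge0 .
\end{equation*}
I would prove it by differentiation rather than repeated integration by parts. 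We have $\frac{\dd}{\dd y}\bigl[e^{-y}P_3(y)\bigr]=e^{-y}\bigl(P_3'(y)-P_3(y)\bigr)=-e^{-y}y^3/6$, because $P_3'(y)=1+y+y^2/2$. Both sides also tend to $0$ as $y\to\infty$, so they are equal. Equivalently, this is the Poisson--gamma identity: $\Prob(\Gam(4,1)>y)$ equals the probability that a Poisson$(y)$ count is at most $3$.

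Substituting $y=bx$ yields \eqref{eq:survival}, and $F=1-S$ follows by definition. As a consistency check, $S(0^+)=w+(1-w)P_3(0)=1$ and $S(x)\to0$ as $x\to\infty$. One can also verify $-\partial_xS=f$ directly using the same derivative identity. I do not expect a genuine obstacle: the gamma tail identity is the only step needing care, and the differentiation argument handles it cleanly.
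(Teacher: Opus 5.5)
Your proposal is correct and follows the paper's proof. Validity comes from $f$ being a convex combination of two densities with weights in $(0,1)$, and the survival function comes from conditioning on $Z$ and using the shape-four gamma tail $e^{-bx}P_3(bx)$. The only difference is minor: you verify that tail by differentiating $e^{-y}P_3(y)$ and matching the limits at infinity, whereas the paper derives it by repeated integration by parts, and the two are equivalent.
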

\begin{proof}
The weights are strictly between zero and one and sum to one. Both component densities integrate to one, proving normalization. Repeated integration by parts gives the gamma survival function $e^{-bx}P_3(bx)$, and conditioning on $Z$ gives \eqref{eq:survival}.
\end{proof}

For this model, the explicit survival function is
\begin{equation}\label{eq:specificsurvival}
 S(x;\theta)=\frac{\theta^2 e^{-\theta^2x}+e^{-\theta x/2}P_3(\theta x/2)}{1+\theta^2}.
\end{equation}
The cumulative hazard is $H(x;\theta)=-\log S(x;\theta)$.

\begin{proposition}[Identifiability within a specification]\label{prop:ident}
For LR3, distinct positive parameters determine distinct distributions.
\end{proposition}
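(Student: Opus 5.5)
The plan is to use the value of the density at the origin, as the abstract indicates. From \eqref{eq:specificpdf}, the density extends continuously to $x=0$, and its right-hand limit is
\begin{equation*}
 f(0^+;\theta)=\lim_{x\downarrow0}f(x;\theta)=\frac{\theta^4}{1+\theta^2},
\end{equation*}
because the gamma term carries the factor $x^3$ and vanishes at zero. This limit is a functional of the distribution alone. It equals $\lim_{x\downarrow0}F(x;\theta)/x$, which by Proposition~\ref{prop:valid} depends only on $F$. It is therefore unaffected by any choice of density version on a null set.

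The next step is to show that $h(\theta)=\theta^4/(1+\theta^2)$ is strictly increasing on $(0,\infty)$. I would use the derivative
\begin{equation*}
 h'(\theta)=\frac{4\theta^3(1+\theta^2)-2\theta^5}{(1+\theta^2)^2}=\frac{2\theta^3(2+\theta^2)}{(1+\theta^2)^2}>0,
\end{equation*}
or, more simply, observe that $h(\theta)=\theta^2\cdot\theta^2/(1+\theta^2)$ is a product of two positive increasing functions.

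The argument then closes as follows. Suppose $\theta_1\neq\theta_2$ give the same distribution. Then $F(\cdot;\theta_1)=F(\cdot;\theta_2)$, so $h(\theta_1)=h(\theta_2)$, and injectivity of $h$ gives $\theta_1=\theta_2$, a contradiction.

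I do not expect a genuine obstacle here. The only point needing care is to justify that $f(0^+)$ is determined by the distribution rather than by a particular density version. I would handle this by writing it as the limit of $F(x)/x$ and applying l'Hôpital's rule (or the mean value theorem) together with the continuity of \eqref{eq:specificpdf} at zero.
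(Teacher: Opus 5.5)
Your proposal is correct and follows the paper's proof essentially step for step: both use the limit $f(0+;\theta)=\theta^4/(1+\theta^2)$ and its strictly positive derivative $2\theta^3(2+\theta^2)/(1+\theta^2)^2$ to obtain injectivity. The only difference is cosmetic: you show that $f(0+)$ is determined by the distribution through $\lim_{x\downarrow0}F(x)/x$, whereas the paper argues that equal distributions have equal continuous densities on the positive support, and either justification suffices.
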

\begin{proof}
The continuous extension of the density to zero is
\begin{equation}\label{eq:origin}
 f(0+;\theta)=wa=\frac{\theta^4}{1+\theta^2},\qquad
 \frac{\dd}{\dd\theta}f(0+;\theta)=\frac{2\theta^3(2+\theta^2)}{(1+\theta^2)^2}>0.
\end{equation}
Equal distributions have equal continuous densities on the positive support and therefore equal limits at zero. Strict monotonicity identifies $\theta$.
\end{proof}

\section{Moments, transforms, and related measures}\label{sec:moments}
\begin{proposition}[Raw moments and transforms]\label{prop:moments}
For every real $r>-1$,
\begin{equation}\label{eq:moments}
 \mu'_r=\E(X^r)=w\frac{\Gamma(r+1)}{a^r}
 +(1-w)\frac{\Gamma(r+4)}{6b^r}.
\end{equation}
The moment-generating function is
\begin{equation}\label{eq:mgf}
 M(t)=w\frac{a}{a-t}+(1-w)\left(\frac{b}{b-t}\right)^4,
 \qquad t<\min(a,b).
\end{equation}
The Laplace transform is $M(-s)$ for $s\geq0$, and the characteristic function is obtained by replacing $t$ by $it$ in \eqref{eq:mgf}.
\end{proposition}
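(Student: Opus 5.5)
The plan is to condition on the latent indicator $Z$ from \eqref{eq:latent} and use linearity of expectation. For any nonnegative measurable $h$ (or any $h$ with $\E|h(X)|<\infty$), the mixture representation \eqref{eq:generalpdf} gives
\begin{equation*}
\E h(X)=w\int_0^\infty h(x)\,a e^{-ax}\,\dd x+(1-w)\int_0^\infty h(x)\,\frac{b^4x^3}{6}e^{-bx}\,\dd x .
\end{equation*}
This holds by Tonelli's theorem, since the density is a finite nonnegative combination. The work therefore reduces to evaluating each component integral and then identifying exactly when it is finite.

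For the raw moments I take $h(x)=x^r$. In the exponential integral I substitute $u=ax$, which gives $a^{-r}\int_0^\infty u^{r}e^{-u}\,\dd u=\Gamma(r+1)/a^r$. This integral converges precisely when $r+1>0$, because the integrand behaves like $u^r$ at the origin and the exponential factor controls the tail. In the gamma integral I substitute $u=bx$, which gives $\frac{1}{6b^r}\int_0^\infty u^{r+3}e^{-u}\,\dd u=\Gamma(r+4)/(6b^r)$. This converges for $r>-4$, a weaker condition. Since both weights are strictly positive, $\E X^r$ is finite if and only if $r>-1$, and on that range \eqref{eq:moments} follows. The main point needing care is this step: the exponential component has positive density at the origin, $f(0+;\theta)=wa>0$ by \eqref{eq:origin}, so negative moments of order $r\le-1$ diverge. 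This is what justifies the stated range.

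For the moment-generating function I take $h(x)=e^{tx}$, which is nonnegative, so Tonelli again applies. The exponential integral is $a\int_0^\infty e^{-(a-t)x}\,\dd x=a/(a-t)$ when $t<a$, and it is infinite otherwise. For the gamma integral, substituting $u=(b-t)x$ gives $\frac{b^4}{6(b-t)^4}\Gamma(4)=\bigl(b/(b-t)\bigr)^4$ when $t<b$, and it is infinite otherwise. Combining the two yields \eqref{eq:mgf} exactly on $t<\min(a,b)$.

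The Laplace transform $\E e^{-sX}$ is the case $t=-s\le0$, which lies inside the domain. For the characteristic function, the integrand $e^{itx}$ is bounded, so each component integral converges absolutely for all real $t$. I would evaluate $\int_0^\infty x^{k}e^{-(c-it)x}\,\dd x=k!/(c-it)^{k+1}$ for $\Re(c-it)=c>0$. This can be done either by the same integration by parts used in Proposition~\ref{prop:valid}, or by analytic continuation of the real identity in $t$ from $t<\min(a,b)$ to the strip $\Re t<\min(a,b)$, which contains the imaginary axis. The result is that replacing $t$ by $it$ in \eqref{eq:mgf} gives $\E e^{itX}$.
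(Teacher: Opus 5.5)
Your proposal is correct and follows the same route as the paper: you split the expectation over the two mixture components, substitute $u=ax$ and $u=bx$ to reduce to gamma integrals, and observe that the exponential component forces $r>-1$ while the two components together force $t<\min(a,b)$. Your treatment of the characteristic function, by direct complex integration or by analytic continuation to complex $t$ with real part below $\min(a,b)$, justifies a step the paper only states.
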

\begin{proof}
Integrate $x^r$ against the two component densities and substitute $y=ax$ and $y=bx$. The exponential integral requires $r>-1$, which also ensures finiteness of the gamma integral. Integration of $e^{tx}$ gives \eqref{eq:mgf}; because both weights are positive, finiteness requires $t$ to be smaller than both rates.
\end{proof}

Writing $\mu=\mu'_1$, the first four raw moments are
\begin{align}
 \mu&=\frac{w}{a}+\frac{4(1-w)}{b},\label{eq:mean}\\
 \mu'_2&=\frac{2w}{a^2}+\frac{20(1-w)}{b^2},\\
 \mu'_3&=\frac{6w}{a^3}+\frac{120(1-w)}{b^3},\\
 \mu'_4&=\frac{24w}{a^4}+\frac{840(1-w)}{b^4}.
\end{align}
In particular, the law of total variance gives the interpretable decomposition
\begin{equation}\label{eq:variance}
 \sigma^2=\frac{w}{a^2}+\frac{4(1-w)}{b^2}
 +w(1-w)\left(\frac1a-\frac4b\right)^2.
\end{equation}
The last term quantifies the contribution of between-component heterogeneity. The coefficient of variation, skewness, and excess kurtosis are
\begin{align}
 \mathrm{CV}&=\sigma/\mu,\\
 \gamma_1&=(\mu'_3-3\mu\mu'_2+2\mu^3)/\sigma^3,\\
 \gamma_2&=(\mu'_4-4\mu\mu'_3+6\mu^2\mu'_2-3\mu^4)/\sigma^4-3.
\end{align}
These expressions determine the dispersion and standardized shape measures of LR3.

For LR3, substitution of the rates yields the fully parameterized raw-moment formula
\begin{equation}\label{eq:specificmoments}
 \mu'_r=\frac{\theta^{2-2r}\Gamma(r+1)+2^r\theta^{-r}\Gamma(r+4)/6}{1+\theta^2},\qquad r>-1.
\end{equation}
In particular,
\begin{equation}\label{eq:specificmeans}
 \mu(\theta)=\frac{1+8/\theta}{1+\theta^2},\qquad \mu'_2=\frac{82}{\theta^2(1+\theta^2)}.
\end{equation}
The variance can equivalently be computed as $\sigma^2=\mu'_2-\mu^2$.
\begin{proposition}[Monotonicity of the mean]\label{prop:meanmono}
The LR3 mean is continuous and strictly decreasing from infinity to zero on $(0,\infty)$.
\end{proposition}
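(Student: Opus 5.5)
The plan is to work directly with the closed form \eqref{eq:specificmeans}, $\mu(\theta)=(1+8/\theta)/(1+\theta^2)$, rewritten as
\begin{equation*}
 \mu(\theta)=\frac{\theta+8}{\theta(1+\theta^2)},\qquad \theta>0.
\end{equation*}
This is a rational function whose denominator $\theta(1+\theta^2)$ is strictly positive on $(0,\infty)$. Continuity is therefore immediate, as is differentiability of all orders. The argument then splits into three short steps: strict monotonicity, the limit at the origin, and the limit at infinity.

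For monotonicity, I would avoid the quotient rule on the combined fraction. Instead I write $\mu$ as a product of two positive factors, $\mu(\theta)=u(\theta)v(\theta)$ with $u(\theta)=1+8/\theta$ and $v(\theta)=1/(1+\theta^2)$. Both factors are strictly positive on $(0,\infty)$. Their derivatives are $u'(\theta)=-8/\theta^2<0$ and $v'(\theta)=-2\theta/(1+\theta^2)^2<0$. Hence
\begin{equation*}
 \mu'(\theta)=u'(\theta)v(\theta)+u(\theta)v'(\theta)<0
\end{equation*}
for every $\theta>0$, since each term is a positive number times a negative number. This gives strict decrease without computing any polynomial numerator. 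As a cross-check, the quotient rule on $(\theta+8)/(\theta+\theta^3)$ yields the numerator $-(2\theta^3+24\theta^2+8)$, which is clearly negative.

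For the limits, note that as $\theta\to0^+$ we have $u(\theta)\to\infty$ and $v(\theta)\to1$, so $\mu(\theta)\to\infty$. As $\theta\to\infty$ we have $u(\theta)\to1$ and $v(\theta)\to0$, so $\mu(\theta)\to0$. Continuity and strict monotonicity then make $\mu$ a bijection of $(0,\infty)$ onto $(0,\infty)$, by the intermediate value theorem. This is exactly the fact used later for the unique moment estimator.

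There is no genuine obstacle here. The only point needing care is to state the derivative sign argument cleanly, which the product decomposition makes transparent. I would also remark that the range statement relies on the specific rate parametrization $a=\theta^2$, $b=\theta/2$. Under the scale version LR3-S the factor $u$ is no longer monotone, which is why the argument does not transfer to Appendix~\ref{app:scale}.
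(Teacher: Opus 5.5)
Your argument is correct and is essentially the paper's: the paper computes $\mu'(\theta)=-(2\theta+24+8/\theta^2)/(1+\theta^2)^2<0$, which is your cross-check numerator divided by $\theta^2(1+\theta^2)^2$, and reads the limits off the closed form; your product-of-positive-decreasing-factors observation just gets the same sign without simplifying. One correction to your closing aside: for LR3-S the factor is $u_S(\theta)=1+2\theta$, which is still monotone but \emph{increasing}, so the two terms of $u_S'v+u_Sv'$ have opposite signs and the sign argument breaks down, consistent with the nonmonotone mean in Appendix~\ref{app:scale}.
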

\begin{proof}
Differentiating gives
\begin{equation}
 \frac{\dd\mu(\theta)}{\dd\theta}=-\frac{2\theta+24+8/\theta^2}{(1+\theta^2)^2}.
\end{equation}
The displayed derivative is strictly negative for every positive parameter. The endpoint limits follow directly from \eqref{eq:specificmeans}.
\end{proof}

\begin{figure}[htbp]\centering
\includegraphics[width=.95\textwidth]{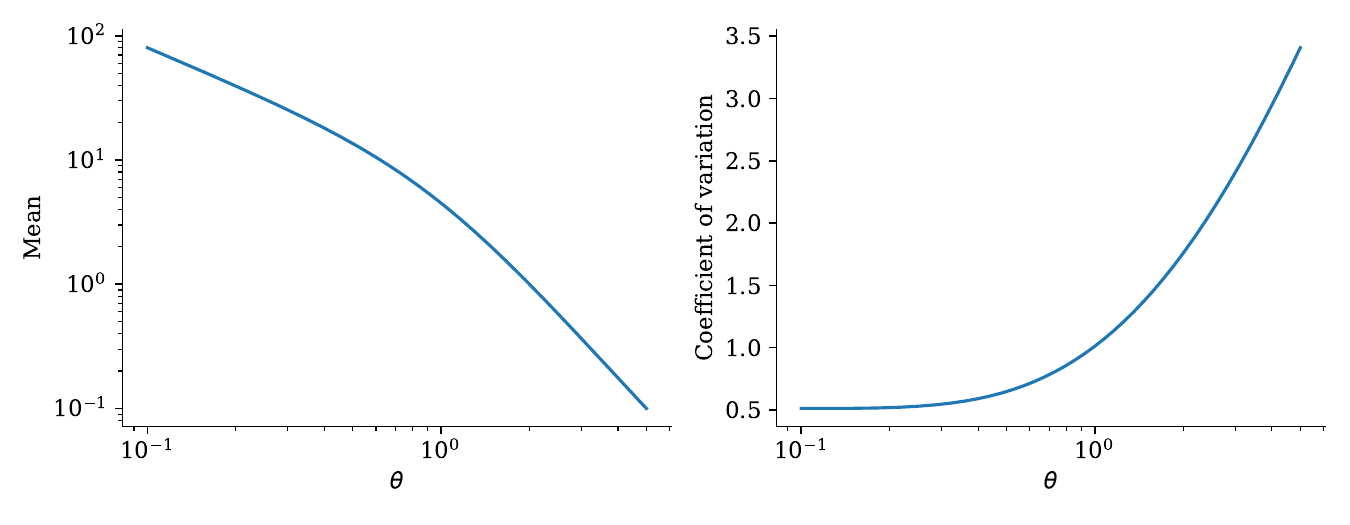}
\caption{Mean and coefficient of variation for LR3. These are direct evaluations of the analytic moment formulas in a fixed working time unit.}\label{fig:moments}
\end{figure}

\subsection{Truncated moments, quantiles, and order statistics}
Let $\gamma(\nu,z)=\int_0^z u^{\nu-1}e^{-u}\dd u$ and $\Gamma(\nu,z)=\int_z^\infty u^{\nu-1}e^{-u}\dd u$ denote the lower and upper incomplete gamma functions. For $r>-1$,
\begin{align}
 T_r(x)&=\E[X^r\mathbf1\{X\leq x\}]
 =w a^{-r}\gamma(r+1,ax)+(1-w)\frac{b^{-r}}6\gamma(r+4,bx),\label{eq:lower}\\
 U_r(x)&=\E[X^r\mathbf1\{X>x\}]
 =w a^{-r}\Gamma(r+1,ax)+(1-w)\frac{b^{-r}}6\Gamma(r+4,bx).\label{eq:upper}
\end{align}
Because $f(x;\theta)>0$, every $p\in(0,1)$ has a unique quantile $Q(p;\theta)$. It can be found by solving $S(Q;\theta)=1-p$ using a bracketed root finder. There is no need to assume an elementary inverse CDF. For a numerically stable bracket, start at zero, choose any positive upper value, and double it until the survival probability is below $1-p$.

For independent observations of size $n$, the $k$th order statistic has density
\begin{equation}
 f_{k:n}(x)=\frac{n!}{(k-1)!(n-k)!}
 F(x)^{k-1}S(x)^{n-k}f(x).
\end{equation}
For example, the minimum has survival $S(x)^n$ and hazard $n h(x)$, corresponding to an independent series system. These identities require no additional distributional assumptions beyond independence.

The Lorenz curve, when this measure is appropriate for a positive quantity, is $L(p)=T_1(Q(p))/\mu$. The mean absolute deviation about the mean is
\begin{equation}
 \E|X-\mu|=2\{\mu F(\mu)-T_1(\mu)\}.
\end{equation}
Both expressions are consequences of the truncated first moment, rather than separate special-purpose derivations.

\section{Density shape and hazard behavior}\label{sec:shape}
\subsection{A complete density-turning-point criterion}
The density need not be unimodal in the sense of having a single peak over the closure of the support. It always decreases immediately to the right of zero, but can acquire an additional interior peak.
\begin{theorem}[Density-shape threshold]\label{thm:mode}
Fix $0<w<1$ and $a,b>0$ in \eqref{eq:generalpdf}. For $0<x<3/b$, define
\begin{equation}\label{eq:Rshape}
 R(x)=\frac{(1-w)b^4}{6wa^2}x^2(3-bx)e^{(a-b)x}.
\end{equation}
This function has a unique maximizer $x_\star\in(0,3/b)$, characterized by
\begin{equation}\label{eq:xstar}
 \frac2{x_\star}-\frac b{3-bx_\star}+(a-b)=0.
\end{equation}
If $R(x_\star)<1$, the density is strictly decreasing. If $R(x_\star)=1$, it remains strictly decreasing but has one horizontal tangent. If $R(x_\star)>1$, the density has exactly two stationary points: a local minimum followed by a local maximum. No additional stationary points occur.
\end{theorem}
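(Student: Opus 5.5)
Differentiate \eqref{eq:generalpdf} directly:
\[
 f'(x)=-wa^2e^{-ax}+(1-w)\frac{b^4}{6}x^2(3-bx)e^{-bx}.
\]
For $x\geq 3/b$ both terms are nonpositive and the first is strictly negative, so $f'(x)<0$ there. On $(0,3/b)$ we factor out the positive quantity $wa^2e^{-ax}$ and obtain
\[
 f'(x)=wa^2e^{-ax}\{R(x)-1\},
\]
with $R$ as in \eqref{eq:Rshape}. Hence the sign of $f'$ on $(0,3/b)$ is the sign of $R(x)-1$. The theorem then reduces to the claim that $R$ is unimodal on $(0,3/b)$: strictly increasing up to $x_\star$ and strictly decreasing afterwards.

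To prove unimodality, pass to $\log R$ on $(0,3/b)$:
\[
 \ell(x)=\log R(x)=\mathrm{const}+2\log x+\log(3-bx)+(a-b)x,
\]
\[
 \ell'(x)=\frac2x-\frac b{3-bx}+(a-b),\qquad \ell''(x)=-\frac2{x^2}-\frac{b^2}{(3-bx)^2}<0.
\]
So $\ell'$ is strictly decreasing. It tends to $+\infty$ as $x\downarrow0$ and to $-\infty$ as $x\uparrow3/b$. It therefore has exactly one zero $x_\star$, which is \eqref{eq:xstar}, and $\ell$ (hence $R$) strictly increases before $x_\star$ and strictly decreases after it. Moreover $R(0+)=0$ and $R(3/b-)=0$.

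The three cases now follow from counting solutions of $R=1$.
\begin{itemize}
\item If $R(x_\star)<1$, then $R<1$ everywhere, so $f'<0$ on $(0,\infty)$ and $f$ is strictly decreasing.
\item If $R(x_\star)=1$, then $f'$ vanishes only at $x_\star$ and is negative elsewhere. A continuous function with a single zero of its negative derivative is still strictly decreasing, and $x_\star$ is the horizontal tangent.
\item If $R(x_\star)>1$, strict monotonicity of $R$ on each side of $x_\star$, together with the zero boundary limits and the intermediate value theorem, gives exactly one root $x_1\in(0,x_\star)$ and exactly one root $x_2\in(x_\star,3/b)$ of $R=1$. The derivative $f'$ changes sign from $-$ to $+$ at $x_1$ and from $+$ to $-$ at $x_2$, so $x_1$ is a local minimum and $x_2$ a local maximum.
\end{itemize}
Together with $f'<0$ on $[3/b,\infty)$, this excludes any further stationary points.

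The only step needing care is the unimodality of $R$. The $(a-b)x$ term can have either sign, so monotonicity of $R$ itself is not obvious. Working with $\log R$ resolves this, because the linear term drops out of $\ell''$ and strict concavity holds regardless of the sign of $a-b$.
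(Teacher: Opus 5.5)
Your proof is correct and follows essentially the same route as the paper. You write $f'(x)=wa^2e^{-ax}\{R(x)-1\}$ on $(0,3/b)$, note that $f'<0$ on $[3/b,\infty)$, and use strict concavity of $\log R$, with $(\log R)'$ running from $+\infty$ to $-\infty$, to get unimodality of $R$ and then compare $R(x_\star)$ with one; you just spell out the intermediate-value and sign-change steps that the paper leaves implicit.
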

\begin{proof}
Differentiation yields
\begin{equation}\label{eq:pdfderivative}
 f'(x)=-wa^2e^{-ax}+(1-w)\frac{b^4}{6}x^2(3-bx)e^{-bx}.
\end{equation}
For $x\geq3/b$, both terms make $f'(x)<0$. On $(0,3/b)$, its sign is the sign of $R(x)-1$. Moreover,
\[
 (\log R)'(x)=\frac2x-\frac b{3-bx}+a-b,\qquad
 (\log R)''(x)=-\frac2{x^2}-\frac{b^2}{(3-bx)^2}<0.
\]
The first derivative decreases continuously from $+\infty$ to $-\infty$. Hence $R$ increases to a unique maximum and then decreases, with zero limits at both interval endpoints. Comparing that maximum with one proves all three cases and the signs on either side of each stationary point.
\end{proof}

For LR3, writing $y=\theta x/2$ yields
\begin{equation}
 R(x)=\frac{y^2(3-y)e^{(2\theta-1)y}}{24\theta^4},\qquad 0<y<3.
\end{equation}
The unique maximizing value solves $2/y_\star-1/(3-y_\star)+2\theta-1=0$. Its maximum relative to one determines the density shape. At $\theta=1/2$, the rates equal $1/4$ and $w=1/5$, so $y_\star=2$ and $R(x_\star)=8/3>1$. The density therefore has both a boundary peak and an interior local peak even when the rates coincide.

\subsection{Hazard at the origin and in the tail}
The hazard is
\begin{equation}\label{eq:hazard}
 h(x;\theta)=\frac{wae^{-ax}+(1-w)b^4x^3e^{-bx}/6}
 {we^{-ax}+(1-w)e^{-bx}P_3(bx)}.
\end{equation}
It is also a survival-weighted average of the component hazards:
\begin{equation}\label{eq:hazardmix}
 h(x)=\pi(x)a+\{1-\pi(x)\}h_G(x),\qquad
 \pi(x)=\frac{we^{-ax}}{S(x)},\quad
 h_G(x)=\frac{b^4x^3}{6P_3(bx)}.
\end{equation}
The weights depend on time and differ from the prior mixture weights. This distinction explains why an increasing gamma hazard does not determine the mixture's shape.

\begin{theorem}[Initial and limiting hazard]\label{thm:hazard}
For LR3 and every $\theta>0$,
\begin{equation}\label{eq:hzero}
 h(0+)=wa,\qquad h'(0+)=-w(1-w)a^2<0.
\end{equation}
Furthermore,
\begin{equation}\label{eq:htail}
 \lim_{x\to\infty}h(x)=\min(a,b).
\end{equation}
If $b\leq a$, then
\begin{equation}\label{eq:gammatail}
 h(x)=b-\frac3x+O(x^{-2}),\qquad
 h'(x)=\frac3{x^2}+O(x^{-3}),
\end{equation}
so the hazard eventually increases and cannot be globally monotone. If $a<b$, then
\begin{equation}\label{eq:exptail}
 h(x)-a\sim\frac{1-w}{w}\frac{b^3(b-a)}6x^3e^{-(b-a)x},
\end{equation}
and the hazard is eventually decreasing toward $a$ from above.
\end{theorem}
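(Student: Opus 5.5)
The plan is to work directly with the quotient representation \eqref{eq:hazard}, $h=f/S$, and treat the origin and the tail separately.

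\emph{Behaviour at the origin.} Near zero, use the Taylor expansions
\[
f(x)=wa-wa^2x+O(x^2),\qquad S(x)=1-wax+O(x^2).
\]
The gamma part of $f$ is $O(x^3)$, and the gamma survival $e^{-bx}P_3(bx)=1-O(x^4)$. Expanding the quotient gives
\[
h(x)=wa-wa^2x+w^2a^2x+O(x^2).
\]
This yields $h(0+)=wa$ and $h'(0+)=-w(1-w)a^2$. The identification $h'(0+)=\lim_{x\to0}h'(x)$ holds because $h$ is a ratio of analytic functions with $S(0)=1$. Equivalently, I would use $h'=f'/S+h^2$ together with $f'(0+)=-wa^2$ from \eqref{eq:pdfderivative}.

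\emph{Tail, case $b\le a$.} Factor $e^{-bx}$ from the numerator and the denominator. The exponential contributions become $wae^{-(a-b)x}$ and $we^{-(a-b)x}$. For $b<a$ these are exponentially small. For $b=a$ they are constants, which are negligible against the polynomial terms $x^3$. Hence
\[
h(x)=\frac{b^4x^3/6+O(1)}{b^3x^3/6+b^2x^2/2+O(x)}.
\]
Dividing out gives
\[
h(x)=b\Bigl(1+\tfrac{3}{bx}+O(x^{-2})\Bigr)^{-1}=b-\frac3x+O(x^{-2}),
\]
so $h(x)\to b=\min(a,b)$. For the derivative I would not differentiate the $O$-term. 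Instead, $h$ is a ratio of functions of the form polynomial plus exponentially small remainder, each with its derivative of the same form. Differentiating the exact quotient therefore gives a rational function in $x$, up to exponentially small corrections, whose expansion is $3/x^2+O(x^{-3})$. Consequently $h'>0$ for large $x$. Since $h'(0+)<0$, $h$ is not globally monotone.

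\emph{Tail, case $a<b$.} Write
\[
h-a=\frac{f-aS}{S}.
\]
The exponential terms cancel exactly in $f-aS$, leaving
\[
(1-w)e^{-bx}\Bigl[\tfrac{b^4x^3}{6}-aP_3(bx)\Bigr]\sim(1-w)e^{-bx}\,\tfrac{b^3(b-a)}{6}\,x^3.
\]
Meanwhile $S\sim we^{-ax}$. Their ratio gives \eqref{eq:exptail}, and hence $h\to a=\min(a,b)$ from above. To show that $h$ is eventually decreasing, I would differentiate the exact expression
\[
h-a=\frac{(1-w)e^{-(b-a)x}\,q(x)}{w+(1-w)e^{-(b-a)x}P_3(bx)},
\]
where $q$ is the cubic in brackets above. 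Its derivative is
\[
(1-w)e^{-(b-a)x}\bigl[q'-(b-a)q\bigr]/w\,\bigl(1+o(1)\bigr)\;+\;\text{smaller terms}.
\]
The leading coefficient is $-(b-a)\,b^3(b-a)/6\cdot x^3<0$, so $h'<0$ eventually.

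\emph{Main obstacle.} The delicate step is justifying the derivative asymptotics rather than merely the function asymptotics. In both tail cases I would handle this by differentiating the exact closed form and only then expanding. This avoids any illegitimate termwise differentiation of $O(\cdot)$ remainders. The $b=a$ boundary also needs a brief separate check, namely that the constant exponential term does not affect the $x^{-1}$ and $x^{-2}$ coefficients, which holds because it enters only at relative order $x^{-3}$.
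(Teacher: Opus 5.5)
Your proposal is correct and follows the paper's own proof. At the origin, both arguments use $h=f/S$ with $S(0)=1$, $S'(0)=-wa$ and $f'(0)=-wa^2$. In the tail, both factor out the dominant exponential when $b\le a$, and use the exact identity $h-a=(1-w)e^{-bx}\{b^4x^3/6-aP_3(bx)\}/S(x)$ when $a<b$. The paper likewise justifies the derivative asymptotics by differentiating the exact closed form, only more tersely than you do.
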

\begin{proof}
Since $S(0)=1$, $f(0)=wa$, $S'(0)=-wa$, and $f'(0)=-wa^2$, differentiating $h=f/S$ gives \eqref{eq:hzero}. If $b<a$, the gamma survival term dominates exponentially. If $b=a$, its cubic polynomial factor still dominates the exponential component. In either case, expansion of the explicit polynomial ratio in \eqref{eq:hazard} gives \eqref{eq:gammatail}; the exponentially small corrections when $b<a$ and the lower polynomial terms when $b=a$ can be differentiated to give the displayed derivative. If $a<b$, use
\[
 h-a=\frac{(1-w)e^{-bx}\{b^4x^3/6-aP_3(bx)\}}{S(x)}.
\]
The numerator's leading polynomial coefficient is $b^3(b-a)/6$, and $S(x)\sim we^{-ax}$. This proves \eqref{eq:exptail}; differentiation of this explicit rational-exponential expression shows eventual decrease. The common limit follows in all cases.
\end{proof}

\begin{corollary}\label{cor:nonmono}
For LR3,
\begin{equation}
 h(0+)=\frac{\theta^4}{1+\theta^2},\qquad
 h'(0+)=-\frac{\theta^6}{(1+\theta^2)^2},\qquad
 h(\infty)=\min(\theta^2,\theta/2).
\end{equation}
For $\theta\geq1/2$ the hazard eventually increases toward $\theta/2$, so it is nonmonotone. For $0<\theta<1/2$ it eventually decreases toward $\theta^2$ from above, without implying global decrease. No positive parameter gives a globally increasing hazard because the initial derivative is strictly negative.

\end{corollary}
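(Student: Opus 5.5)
The corollary follows by specializing Theorem~\ref{thm:hazard} to $a=\theta^2$, $b=\theta/2$ and $w=\theta^2/(1+\theta^2)$; I will carry this out in three steps.

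\textbf{Step 1: the three displayed quantities.} I substitute the LR3 parameters into \eqref{eq:hzero}. This gives $h(0+)=wa=\theta^4/(1+\theta^2)$. Since $1-w=1/(1+\theta^2)$, it also gives
\[
 h'(0+)=-w(1-w)a^2=-\frac{\theta^2}{1+\theta^2}\cdot\frac{1}{1+\theta^2}\cdot\theta^4=-\frac{\theta^6}{(1+\theta^2)^2}.
\]
The limit $h(\infty)=\min(\theta^2,\theta/2)$ is \eqref{eq:htail} read off directly.

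\textbf{Step 2: the case split.} The dichotomy in Theorem~\ref{thm:hazard} depends on the sign of $b-a=\theta/2-\theta^2=\theta(1/2-\theta)$. So $b\le a$ exactly when $\theta\ge1/2$, and $a<b$ exactly when $0<\theta<1/2$.

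For $\theta\ge1/2$, I use \eqref{eq:gammatail}: the hazard derivative behaves like $3/x^2>0$ for large $x$, so the hazard eventually increases toward $b=\theta/2$. Step 1 gives $h'(0+)<0$, so $h$ is strictly decreasing on some initial interval $(0,\varepsilon)$. A function that strictly decreases on one interval and strictly increases on another cannot be monotone, so the hazard is nonmonotone. The boundary value $\theta=1/2$, where $a=b$, is covered because the theorem includes equality.

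For $0<\theta<1/2$, I use \eqref{eq:exptail}. Its right side is positive since $b>a$, and the hazard is eventually decreasing toward $a=\theta^2$ from above. I will note explicitly that this is a tail statement only: Theorem~\ref{thm:hazard} gives no information on intermediate turning points, so global decrease is not asserted.

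\textbf{Step 3: no globally increasing hazard.} For every $\theta>0$, $h'(0+)=-\theta^6/(1+\theta^2)^2<0$. Because $h$ is $C^1$ on $[0,\infty)$ (it is a ratio of smooth functions with $S>0$), $h'<0$ on a neighbourhood of $0$. Hence $h$ is strictly decreasing there and cannot be nondecreasing on $(0,\infty)$.

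The only point needing care is that the derivative limit $h'(0+)$ really controls $h$ near the origin. The argument rests on the smoothness of $h=f/S$ up to $x=0$, which holds because both $f$ and $S$ extend analytically to $x=0$ and $S(0)=1$. Everything else is substitution into Theorem~\ref{thm:hazard}.
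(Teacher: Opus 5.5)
Your proposal is correct and follows the paper's proof, which consists solely of substituting $a=\theta^2$, $b=\theta/2$, $w=\theta^2/(1+\theta^2)$ into Theorem~\ref{thm:hazard}. You make explicit what the paper leaves implicit: the arithmetic, the case split via $b-a=\theta(1/2-\theta)$, and the argument that $h'(0+)<0$, together with smoothness of $h=f/S$ at the origin, forces an initial strict decrease.
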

\begin{proof}
Substitute the LR3 rates into Theorem~\ref{thm:hazard}.
\end{proof}
The corollary does not assert exactly one hazard minimum. Counting all hazard turning points requires additional analysis; the density criterion in Theorem~\ref{thm:mode} is a different result.

\begin{figure}[htbp]\centering
\includegraphics[width=\textwidth]{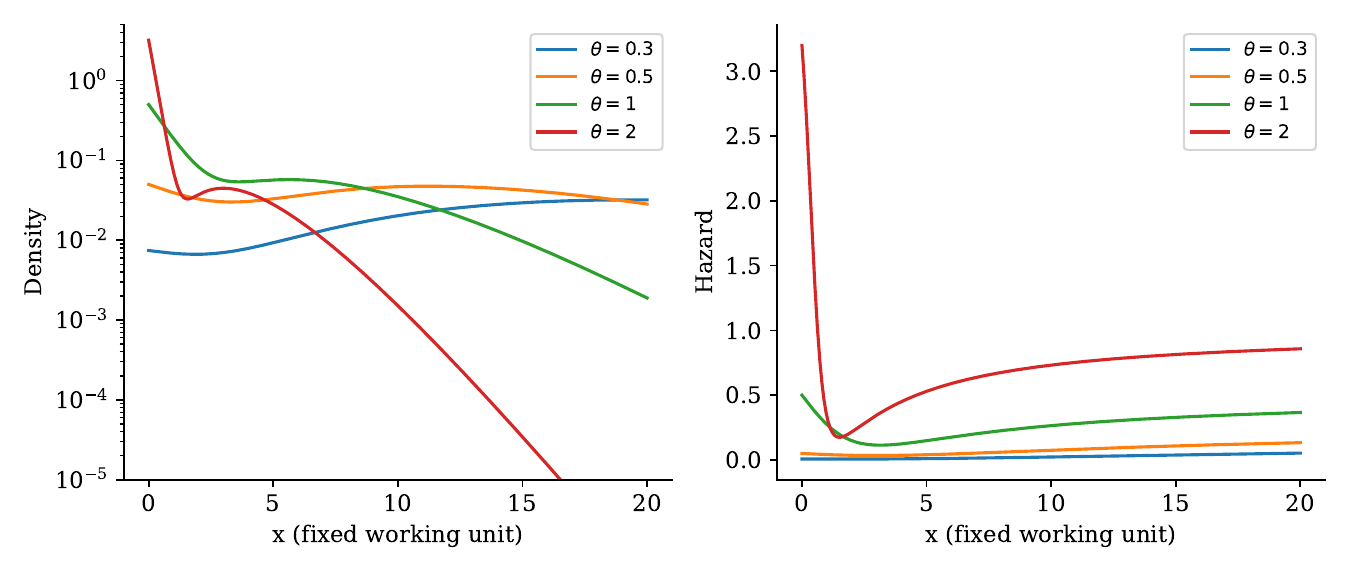}
\caption{LR3 density (left, logarithmic vertical scale) and hazard (right) curves for $\theta=0.3,0.5,1,2$. The finite plotting range illustrates behavior but does not replace the tail results of Theorem~\ref{thm:hazard}.}\label{fig:shapes}
\end{figure}

\section{Reliability functionals}\label{sec:reliability}
\subsection{Mean residual life and mean inactivity time}
The mean residual life at age $x$ is $m(x)=\E(X-x\mid X>x)$. Integrating \eqref{eq:survival} term by term gives
\begin{equation}\label{eq:mrl}
 m(x)=\frac{w e^{-ax}/a+(1-w)e^{-bx}B_3(bx)/b}{S(x)},
 \quad B_3(y)=4+3y+y^2+\frac{y^3}{6}.
\end{equation}
Equivalently, $m(x)=U_1(x)/S(x)-x$. The mean inactivity time is
\begin{equation}
 \widetilde m(x)=\E(x-X\mid X\leq x)=x-\frac{T_1(x)}{F(x)},\qquad x>0.
\end{equation}
These standard lifetime summaries complement the hazard; see \citet{cox1984} for the survival-analysis setting. At zero, $m(0)=\mu$, and its limiting value is $1/\min(a,b)$. The latter follows directly from \eqref{eq:mrl} and the dominant survival component.

\subsection{Stress--strength reliability}
Let $X$ denote strength and $Y$ stress, with independent mixture distributions. Both follow LR3 and may have different parameter values. Write $(w_X,a_X,b_X)$ and $(w_Y,a_Y,b_Y)$ for their component specifications. Define reliability as $\mathcal R=\Prob(Y<X)$; with this convention,
\begin{equation}\label{eq:reliability}
 \mathcal R=w_Xw_Y R_{EE}+w_X(1-w_Y)R_{EG}
 +(1-w_X)w_YR_{GE}+(1-w_X)(1-w_Y)R_{GG},
\end{equation}
where the first subscript refers to the strength component and
\begin{align}
 R_{EE}&=\frac{a_Y}{a_X+a_Y},\\
 R_{EG}&=\left(\frac{b_Y}{b_Y+a_X}\right)^4,\\
 R_{GE}&=1-\left(\frac{b_X}{b_X+a_Y}\right)^4,\\
 R_{GG}&=1-\sum_{k=0}^3\frac{\Gamma(k+4)}{6k!}
 \frac{b_X^4b_Y^k}{(b_X+b_Y)^{k+4}}.\label{eq:RGG}
\end{align}
To verify these expressions, condition on both latent components. For $R_{EG}$, integrate $S_{\Exp(a_X)}(y)$ against the gamma stress density, giving its Laplace transform. For $R_{GE}$, integrate the exponential stress CDF against the gamma strength density. For $R_{GG}$, integrate $S_Y(x)$ against the strength density and subtract from one. Equal distributions yield $\mathcal R=1/2$, as required by independence and continuity.

\section{Estimation and statistical inference}\label{sec:inference}
\subsection{Moment estimation}
Suppose $X_1,\ldots,X_n$ are independent and identically distributed from LR3. The moment estimator $\widehat\theta_{\rm MM}$ solves
\begin{equation}\label{eq:momest}
 \mu(\widehat\theta_{\rm MM})=\overline X.
\end{equation}
Proposition~\ref{prop:meanmono} proves existence and uniqueness for every strictly positive sample mean. Multiplication by the positive denominator gives the model-specific equation
\begin{equation}
 \overline X\theta^3+(\overline X-1)\theta-8=0.
\end{equation}
Despite the polynomial representation, a monotone root solver on $\eta=\log\theta$ is more convenient than explicit radical formulas. The law of large numbers and continuity of the inverse mean prove consistency. Since all moments are finite, the central limit theorem and delta method give
\begin{equation}\label{eq:momasymptotic}
 \sqrt n(\widehat\theta_{\rm MM}-\theta_0)
 \xrightarrow{d}N\left(0,\frac{\sigma^2(\theta_0)}{\{\mu'(\theta_0)\}^2}\right).
\end{equation}
The variance in \eqref{eq:momasymptotic} is readily estimated by plugging the moment estimate into the analytic expressions.

\subsection{Likelihood, score, and observed information}
Define the weighted component densities
\begin{equation}
 A_i=wae^{-aX_i},\qquad B_i=(1-w)b^4X_i^3e^{-bX_i}/6.
\end{equation}
The log likelihood and posterior component probability are
\begin{equation}\label{eq:likelihood}
 \ell(\theta)=\sum_{i=1}^n\log(A_i+B_i),\qquad
 r_i(\theta)=\frac{A_i}{A_i+B_i}.
\end{equation}
Let primes on rates denote derivatives with respect to $\theta$. The weighted-component scores are
\begin{align}
 s_{Ei}&=\frac2\theta-\frac{2\theta}{1+\theta^2}+\frac{a'}a-a'X_i,\label{eq:sE}\\
 s_{Gi}&=-\frac{2\theta}{1+\theta^2}+4\frac{b'}b-b'X_i.\label{eq:sG}
\end{align}
Then
\begin{equation}\label{eq:score}
 U(\theta)=\ell'(\theta)=\sum_{i=1}^n\{r_i s_{Ei}+(1-r_i)s_{Gi}\}.
\end{equation}
Writing $c(\theta)=-2(1-\theta^2)/(1+\theta^2)^2$, the score derivatives are
\begin{align}
 s'_{Ei}&=-\frac2{\theta^2}+c(\theta)+\frac{a''}a-\left(\frac{a'}a\right)^2-a''X_i,\\
 s'_{Gi}&=c(\theta)+4\left\{\frac{b''}b-\left(\frac{b'}b\right)^2\right\}-b''X_i.
\end{align}
Since $r_i'=r_i(1-r_i)(s_{Ei}-s_{Gi})$, the observed information is
\begin{equation}\label{eq:info}
 J_n(\theta)=-\sum_{i=1}^n\left\{r_i s'_{Ei}+(1-r_i)s'_{Gi}
 +r_i(1-r_i)(s_{Ei}-s_{Gi})^2\right\}.
\end{equation}
The final variance term is essential; omitting it treats unobserved component assignments as known. For LR3, the rate derivatives are
\begin{equation}
 a'=2\theta,\quad a''=2,\quad b'=1/2,\quad b''=0.
\end{equation}
The component scores therefore simplify to
\begin{equation}
  s_{Ei}=\frac4\theta-\frac{2\theta}{1+\theta^2}-2\theta X_i,\qquad s_{Gi}=\frac4\theta-\frac{2\theta}{1+\theta^2}-\frac{X_i}{2}.
\end{equation}
These explicit expressions and \eqref{eq:info} are used in the numerical implementation.

For a strictly positive sample, $\ell(\theta)\to-\infty$ as $\theta\downarrow0$ or $\theta\uparrow\infty$. At the lower endpoint, each weighted component density tends to zero with a positive power of $\theta$. At the upper endpoint, the exponential decay dominates the polynomial prefactors. Continuity therefore ensures at least one interior likelihood maximizer. Uniqueness is not asserted.

For numerical estimation, evaluate the likelihood using the log-sum-exp identity
\[
 \log(A+B)=m+\log\{e^{\log A-m}+e^{\log B-m}\},\quad
 m=\max(\log A,\log B).
\]
Optimize in $\eta=\log\theta$, search a grid for candidate peaks, refine each candidate, and retain the largest likelihood. Inspect boundaries and expand the range if they are competitive. A numerical grid is a practical safeguard, not a proof of finding every finite-sample stationary point.

\subsection{A constrained expectation--maximization algorithm}
The latent representation \eqref{eq:latent} yields an expectation--maximization (EM) implementation in the sense of \citet{dempster1977}. Given $\theta^{(q)}$, calculate $r_i^{(q)}=r_i(\theta^{(q)})$ in the E step. Up to terms independent of $\theta$, the M-step objective is
\begin{align}\label{eq:EM}
 Q(\theta\mid\theta^{(q)})={}&\sum_i r_i^{(q)}
 \{\log w(\theta)+\log a(\theta)-a(\theta)X_i\}\notag\\
 &+\sum_i(1-r_i^{(q)})
 \{\log(1-w(\theta))+4\log b(\theta)-b(\theta)X_i\}.
\end{align}
Maximize this scalar function subject to the LR3 rate specification. Updating the weight and two rates independently would fit a different, unconstrained model. Exact maximization of $Q$ makes the observed likelihood nondecreasing, but EM can converge to a local stationary point. Multiple starts and comparison with the direct likelihood search remain useful. The simulation below uses direct likelihood optimization, not EM.

\subsection{Large-sample inference}
\begin{theorem}[Regular likelihood inference]\label{thm:asymptotic}
Let $\theta_0$ be an interior point of a fixed compact parameter interval $K\subset(0,\infty)$. For independent observations from LR3, a global maximizer over $K$ is consistent. Moreover,
\begin{equation}\label{eq:MLEasymptotic}
 \sqrt n(\widehat\theta_{\rm ML}-\theta_0)
 \xrightarrow{d}N(0,I(\theta_0)^{-1}),\qquad
 I(\theta)=\int_0^\infty\{\partial_\theta\log f(x;\theta)\}^2f(x;\theta)\dd x,
\end{equation}
where $0<I(\theta_0)<\infty$.
\end{theorem}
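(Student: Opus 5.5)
We verify the classical Cramér/Wald conditions for the scalar family $\{f(\cdot;\theta):\theta\in K\}$ and then invoke the standard consistency and asymptotic-normality theorems for maximum likelihood estimators.

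\emph{Consistency.} We use Wald's argument on the compact set $K$, which needs three ingredients.

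\begin{enumerate}
\item Identifiability. This is Proposition~\ref{prop:ident}: $\theta\neq\theta_0$ implies $f(\cdot;\theta)\neq f(\cdot;\theta_0)$ on a set of positive measure, because the densities are continuous.
\item Continuity of $\theta\mapsto\log f(x;\theta)$ for each fixed $x>0$. This is immediate from \eqref{eq:specificpdf}.
\item An integrable envelope for $\sup_{\theta\in K}|\log f(x;\theta)|$ under $f(\cdot;\theta_0)$. With $K=[\theta_L,\theta_U]$, the bound
\[
\log f(x;\theta)\le \log\frac{\theta_U^4}{1+\theta_L^2}+\log\Bigl(1+\frac{x^3}{96}\Bigr)
\]
controls the upper side. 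For the lower side, drop the gamma term:
\[
\log f(x;\theta)\ge \log\frac{\theta_L^4}{1+\theta_U^2}-\theta_U^2x.
\]
Both bounds are integrable because all moments are finite (Proposition~\ref{prop:moments}).
\end{enumerate}

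The uniform strong law then gives $\sup_K|n^{-1}\ell(\theta)-\E_{\theta_0}\log f(X;\theta)|\to0$ almost surely. By Jensen's inequality and identifiability, $\theta_0$ is the unique maximizer of the continuous limit $\E_{\theta_0}\log f(X;\theta)$. A global maximizer over $K$ is therefore consistent.

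\emph{Asymptotic normality.} Since $\theta_0$ is interior and $\widehat\theta_{\rm ML}\to\theta_0$, with probability tending to one the estimator solves $U(\widehat\theta)=0$. A mean-value expansion gives
\[
\sqrt n(\widehat\theta-\theta_0)=\frac{n^{-1/2}U(\theta_0)}{n^{-1}J_n(\tilde\theta)}.
\]
We handle numerator and denominator separately.

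\begin{enumerate}
\item Numerator. Differentiation under the integral is justified by dominated convergence. The score \eqref{eq:score} is a convex combination of $s_{Ei}$ and $s_{Gi}$, so on $K$ it is bounded by $C(1+x)$. This gives $\E_{\theta_0}\partial_\theta\log f=0$ and finite variance $I(\theta_0)$. The central limit theorem then yields $n^{-1/2}U(\theta_0)\to N(0,I(\theta_0))$.
\item Denominator. By \eqref{eq:info}, the summands of $J_n$ are bounded on $K$ by $C(1+x)^2$, because $r_i\in[0,1]$ and the component score derivatives are affine in $x$. A uniform law of large numbers on a shrinking neighbourhood of $\theta_0$, together with consistency, gives $n^{-1}J_n(\tilde\theta)\to I(\theta_0)$. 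Here we also use the information identity $-\E\,\partial^2_\theta\log f=I$, valid by the same domination.
\end{enumerate}

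Slutsky's lemma then gives \eqref{eq:MLEasymptotic}.

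\emph{Main obstacle: $I(\theta_0)>0$.} Finiteness of $I(\theta_0)$ follows from the $C(1+x)$ bound. Positivity requires that the score $\partial_\theta\log f(x;\theta_0)$ is not zero for almost every $x$. Write the score as
\[
\partial_\theta\log f=\frac{A(x)s_E(x)+B(x)s_G(x)}{A(x)+B(x)}.
\]
Its numerator has the form
\[
e^{-\theta^2x}(\alpha_0+\alpha_1x)+e^{-\theta x/2}x^3(\beta_0+\beta_1x).
\]
If $\theta_0^2\neq\theta_0/2$, the functions $e^{-\theta_0^2x}x^j$ and $e^{-\theta_0x/2}x^{k}$ are linearly independent. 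Vanishing on $(0,\infty)$ would then force $\beta_1=-\tfrac12\,\cdot$(gamma weight)$=0$, which is false. If $\theta_0=1/2$, the numerator is a polynomial times $e^{-x/4}$ whose $x^4$ coefficient is nonzero, so again it cannot vanish identically.

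A simpler route to positivity is to expand at $x\to0+$. There the score tends to $\partial_\theta\log f(0+;\theta)$, which is strictly positive by \eqref{eq:origin}. Hence the score is nonzero on a neighbourhood of $0$, and $I(\theta_0)>0$. We would use this argument.
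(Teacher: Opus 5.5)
Your proposal is correct and follows essentially the same route as the paper: Wald-type consistency on the compact $K$ from identifiability, an integrable envelope for $\log f$ and the Kullback--Leibler (Jensen) inequality, followed by a score Taylor expansion using polynomial-in-$x$ bounds on the score and observed information, with $I(\theta_0)>0$ obtained from the strictly positive limit of the score at the origin implied by \eqref{eq:origin}. The only differences are minor: your two-sided envelope drops the paper's unnecessary $|\log x|$ term (the exponential component already controls $x\to0$, as the paper itself notes), and your linear-independence argument is a valid but unneeded second proof that the score cannot vanish identically.
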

\begin{proof}
On $K$, weights and rates are bounded away from zero and infinity. The score and its first two parameter derivatives are bounded by polynomials in $x$, since posterior component probabilities lie between zero and one and rate derivatives are bounded on $K$. Component tails admit a uniform exponential envelope. The log density admits an integrable envelope of the form $C(1+x+|\log x|)$; in fact the positive exponential component controls the behavior near zero. These bounds give uniform laws of large numbers and justify differentiating under the integral. Identifiability (Proposition~\ref{prop:ident}) makes the expected log likelihood uniquely maximized at $\theta_0$ by the Kullback--Leibler inequality. Compactness then proves consistency. Finiteness of $I$ follows from the score bound and finite moments. If $I=0$, $\partial_\theta f(x;\theta_0)$ vanishes almost everywhere and, by continuity, everywhere on the positive support. Its continuous limit at zero would then vanish, contradicting the strictly positive derivative of \eqref{eq:origin}. Thus $I>0$. A score Taylor expansion at the consistent maximizer, the central limit theorem, and convergence of the normalized Hessian give \eqref{eq:MLEasymptotic}; these are the standard regular likelihood arguments described by \citet{vandervaart1998}.
\end{proof}

The compact-interval theorem does not establish a uniform approximation as $\theta_0$ approaches zero or infinity, nor does it prove that a local optimizer is a global maximizer. At a fitted solution with $J_n(\widehat\theta)>0$, an approximate $100(1-\alpha)\%$ interval that respects positivity is
\begin{equation}\label{eq:wald}
 \left[\exp\left\{\log\widehat\theta-
 \frac{z_{1-\alpha/2}}{\widehat\theta\sqrt{J_n(\widehat\theta)}}\right\},
 \exp\left\{\log\widehat\theta+
 \frac{z_{1-\alpha/2}}{\widehat\theta\sqrt{J_n(\widehat\theta)}}\right\}\right].
\end{equation}
Here $J_n$ is total information, so no additional factor of $n$ is inserted. A likelihood-based alternative is the set
\begin{equation}
 \{\theta>0:2[\ell(\widehat\theta)-\ell(\theta)]\leq\chi^2_{1,1-\alpha}\}.
\end{equation}
This cutoff concerns inference within LR3 under the stated regularity assumptions. Parametric bootstrap intervals can also be obtained by generating from the fitted model and refitting with the same search procedure.

For a differentiable functional $g(\theta)$, including a survival probability or the mean, the delta-method variance is $\{g'(\theta_0)\}^2/[nI(\theta_0)]$. A quantile derivative follows implicitly from $F(Q(p;\theta);\theta)=p$:
\begin{equation}
 \partial_\theta Q(p;\theta)=-\frac{\partial_\theta F(Q(p;\theta);\theta)}{f(Q(p;\theta);\theta)}.
\end{equation}

\subsection{Spacing estimation and independent right censoring}
For distinct, uncensored observations ordered as $X_{(1)}<\cdots<X_{(n)}$, let $F(X_{(0)})=0$ and $F(X_{(n+1)})=1$. The maximum product of spacings estimator maximizes
\begin{equation}\label{eq:spacing}
 \sum_{i=1}^{n+1}\log\{F(X_{(i)};\theta)-F(X_{(i-1)};\theta)\}.
\end{equation}
The general approach is due to \citet{cheng1983}. Stable differences of survival probabilities are useful in the upper tail. Ties caused by rounding produce zero spacings and require a grouped-data or tie-adjusted method; \eqref{eq:spacing} should not be applied unchanged. No numerical comparison of spacing estimation is claimed in this paper.

For independent right censoring, observe $T_i=\min(X_i,C_i)$ and $\delta_i=\mathbf1\{X_i\leq C_i\}$. Assuming the censoring mechanism is independent of $X_i$ and its distribution has no shared unknown parameter, the part of the log likelihood depending on $\theta$ is
\begin{equation}\label{eq:censored}
 \ell_c(\theta)=\sum_i\{\delta_i\log f(T_i;\theta)+(1-\delta_i)\log S(T_i;\theta)\}.
\end{equation}
The gamma-component posterior probability for a censored observation uses its survival contribution, not its density contribution. In particular, the exponential posterior probability is $we^{-aT_i}/S(T_i)$ for $\delta_i=0$. The complete-data formulation must accordingly include survival terms for censored records. Equation~\eqref{eq:EM}, written for complete lifetimes, is not the censored-data M-step. The uncensored moment estimator is likewise inappropriate when applied naively to $T_i$. Censored estimation is an extension described by its likelihood here and is not evaluated in the simulation.

\section{Numerical study and reproducibility}\label{sec:numerical}
\subsection{Design and implementation}
We report a Monte Carlo experiment for LR3 with $\theta\in\{0.3,0.5,1,2\}$ and $n\in\{50,200\}$. Each of the eight settings uses 500 independently generated samples, for 4,000 datasets in total. These results are the LR3 subset of the combined experiment from which this standalone analysis was developed, not a new independent experiment. The original random stream uses NumPy's \texttt{default\_rng} with seed 20261009. The accompanying script preserves the stream position used for this model before generating its samples. Indicators use $w(\theta)$, and lifetimes are drawn from the appropriate exponential or gamma component. No observed lifetime dataset was used.

For each sample, we computed the moment estimator by a bracketed root search on the log-parameter scale, and the likelihood estimator by evaluating 121 equally spaced log-parameter values on $[-9,9]$, refining all grid-local maxima with bounded scalar optimization. The retained candidate had the largest likelihood. Endpoint comparisons checked that a boundary was not preferred. This is a numerical approximation to global maximization with an explicitly documented search range.

We report empirical bias and root mean squared error (RMSE),
\begin{equation}
 \widehat{\mathrm{Bias}}=\frac1B\sum_{r=1}^B(\widehat\theta_r-\theta),\qquad
 \widehat{\mathrm{RMSE}}=\left\{\frac1B\sum_{r=1}^B(\widehat\theta_r-\theta)^2\right\}^{1/2},
 \quad B=500.
\end{equation}
Coverage refers to the nominal 95\% log-scale Wald interval in \eqref{eq:wald}. The machine-readable output also records optimizer failures, nonpositive observed-information counts, and Monte Carlo standard errors of coverage. An unavailable interval is recorded separately rather than silently treated as a valid result. With 500 replications, the standard error of a coverage estimate near 0.95 is about 0.010. Modest coverage differences should therefore be interpreted cautiously.

Before simulation, numerical quadrature verified normalization and the first two raw moments for LR3 and its alternative scale version at the four parameter values. Finite differences checked the analytic score and Hessian at three positive $x$ values per setting; differentiating the survival function numerically also checked the density. These checks support the implementation but do not substitute for the proofs.

\begin{table}[htbp]\centering\small
\caption{Monte Carlo bias, RMSE, and 95\% log-scale Wald coverage, based on 500 replications per setting. ML denotes maximum likelihood and MM the moment estimator.}\label{tab:simulation}
\begin{tabular}{crrrrrr}
\toprule
$\theta$ & $n$ & Bias (ML) & RMSE (ML) & Bias (MM) & RMSE (MM) & Coverage \\
\midrule
0.3 & 50 & 0.0017 & 0.0205 & 0.0017 & 0.0205 & 0.956 \\
0.3 & 200 & 0.0003 & 0.0102 & 0.0003 & 0.0102 & 0.960 \\
0.5 & 50 & 0.0037 & 0.0355 & 0.0036 & 0.0354 & 0.946 \\
0.5 & 200 & -0.0001 & 0.0167 & -0.0000 & 0.0167 & 0.966 \\
1 & 50 & 0.0072 & 0.0782 & 0.0087 & 0.0860 & 0.922 \\
1 & 200 & 0.0007 & 0.0345 & 0.0010 & 0.0375 & 0.960 \\
2 & 50 & 0.0026 & 0.1418 & 0.0286 & 0.2177 & 0.958 \\
2 & 200 & 0.0008 & 0.0741 & 0.0007 & 0.1013 & 0.954 \\
\bottomrule
\end{tabular}

\end{table}

\subsection{Results}
Table~\ref{tab:simulation} shows that increasing $n$ from 50 to 200 reduces RMSE at every reported parameter value for both estimators. Maximum likelihood has lower empirical RMSE than moment estimation in 5 of the eight settings. The remaining three differences occur at small parameter values and are very small. At $\theta=2$ and $n=50$, ML and MM RMSE are 0.1418 and 0.2177; at $n=200$ they are 0.0741 and 0.1013. These are within-model estimator comparisons, rather than evidence of superiority over competing lifetime distributions.

Observed 95\% interval coverage ranges from 0.922 to 0.966. The lowest reported coverage occurs at $\theta=1$ and $n=50$. With 500 replications per setting, Monte Carlo uncertainty is approximately one percentage point near nominal coverage. All 4,000 likelihood fits in this article returned successful optimization results, no boundary comparison required expansion, and all fitted observed-information values were positive. These finite-sample results support approximate inference within the reported design; they do not guarantee nominal coverage outside it.

\subsection{Scope of empirical comparison}
The simulation assesses estimation under correct specification. It does not compare predictive performance against the Lindley, Weibull, lognormal, or unrestricted exponential--gamma mixture distributions, and it is not evidence of empirical superiority. A future application should specify the data provenance, working time unit, censoring and rounding mechanisms, and sample exclusions before fitting. Positive observations alone are insufficient to validate a lifetime interpretation.

For observed data, a reasonable comparison would fit a constant-hazard exponential model, the Lindley model, flexible two-parameter lifetime models, and an unrestricted exponential--gamma mixture with shape four. For a regular candidate model with $k$ fitted parameters, the Akaike criterion is $\mathrm{AIC}=-2\ell(\widehat\theta)+2k$ \citep{akaike1974}. The fixed-unit LR3 model has $k=1$, so its criterion is $-2\ell(\widehat\theta)+2$. If an additional time scale is estimated, it must count as an additional parameter. These are comparison principles, not fitted results.

Probability plots and held-out log likelihood can supplement information criteria. A goodness-of-fit statistic evaluated after estimating parameters does not generally have the same null distribution as when parameters are known. For example, a parametric-bootstrap CDF test should simulate from the fitted model, refit each bootstrap sample, and recompute the statistic. A real-data results table is deliberately omitted because no application dataset has been analyzed.

\section{Discussion and conclusions}\label{sec:discussion}
LR3 is a one-dimensional restriction of an exponential--gamma mixture with gamma shape four. The common representation yields explicit distributional and reliability quantities, while the constraints permit unique moment estimation and identifiable likelihood inference for the specified model. The density-turning-point threshold and the hazard results give more informative shape statements than visual inspection alone. In particular, an increasing component hazard is not enough to justify describing LR3 as an increasing-hazard distribution.

For LR3, the reversed linear--quadratic rate assignment gives a gamma-dominated tail when $\theta\geq1/2$. The alternative gamma scale interpretation has a different tail regime and a nonmonotone mean. Consequently, reporting the gamma parameter convention is part of specifying the statistical model, rather than a cosmetic notation choice.

Several restrictions matter in applications. First, the component rates and the mixture weight cannot vary independently. This can make a fit parsimonious, but can also force an unrealistic relationship between the frequency and duration of the two mechanisms. Second, the fixed shape four is a modeling choice, not a finding established by observed data. Third, the current numerical study is limited to complete, independent, correctly specified samples and a finite parameter grid. It does not establish robustness to censoring, dependence, rounding, or misspecification.

The time-unit issue deserves particular attention. If $Y=cX$, then its component rates become $a/c$ and $b/c$, while the mixture weight remains $w(\theta)$. Because $w$ is injective, representing $Y$ by the same one-parameter model requires the same $\theta$; that would leave both rates unchanged. Thus LR3 is not closed under arbitrary time-unit changes. The linked weight and rates are interpreted numerically in that working unit; the parameter does not independently adjust the time scale.

A scale extension resolves this practical restriction. If $X$ denotes a main-model lifetime in the reference unit, set $Y=\tau X$ with $\tau>0$. Then
\begin{equation}
 f_Y(y;\theta,\tau)=\tau^{-1}f_X(y/\tau;\theta),\qquad
 S_Y(y;\theta,\tau)=S_X(y/\tau;\theta).
\end{equation}
This is a two-parameter model with rates $a(\theta)/\tau$ and $b(\theta)/\tau$. Under $Y^*=cY$, only $\tau$ changes to $c\tau$. Estimation and identifiability of this extension require their own analysis and are not covered by the one-parameter inference theorem.

Further work could evaluate this scale extension, relax the fixed gamma shape, and assess censoring and empirical performance with appropriately documented datasets. A full classification of hazard turning points would also strengthen the shape analysis. The present results establish mathematical and computational foundations for this particular restriction, while keeping the distinction between analytic properties and demonstrated application performance explicit.

\section*{Data and code availability}
All numerical results use generated data. The accompanying Python script \texttt{reproduce.py} regenerates both figures, the Monte Carlo table, the numerical-check report, and software-version metadata. The source bundle includes the CSV output and environment record. No confidential or proprietary data are involved.

\appendix
\section{The alternative scale interpretation of LR3}\label{app:scale}
If the second gamma parameter in LR3 is a scale $\theta/2$ rather than a rate, the gamma rate is $2/\theta$. This defines a different distribution, denoted here by LR3-S. The common formulas remain valid after replacing $b$ by $2/\theta$ and retaining $a=\theta^2$. Its density and survival function are
\begin{align}
 f_S(x;\theta)&=\frac{\theta^4}{1+\theta^2}e^{-\theta^2x}
 +\frac{8x^3}{3\theta^4(1+\theta^2)}e^{-2x/\theta},\\
 S_S(x;\theta)&=\frac{\theta^2e^{-\theta^2x}+e^{-2x/\theta}P_3(2x/\theta)}{1+\theta^2}.
\end{align}
The hazard follows by dividing these expressions. Its initial hazard and derivative are still $w\theta^2$ and $-w(1-w)\theta^4$, while its tail hazard is $\min(\theta^2,2/\theta)$. The gamma-dominated regime is $\theta\geq2^{1/3}$.

The raw moments are
\begin{equation}
 \E(X^r)=\frac{\theta^2\Gamma(r+1)\theta^{-2r}+\Gamma(r+4)(\theta/2)^r/6}{1+\theta^2},
 \qquad r>-1.
\end{equation}
In particular,
\begin{equation}
 \mu_S(\theta)=\frac{1+2\theta}{1+\theta^2},\qquad
 \mu'_S(\theta)=\frac{2(1-\theta-\theta^2)}{(1+\theta^2)^2}.
\end{equation}
Unlike the main LR3 rate model, this mean is not monotone. It increases until $\theta=(\sqrt5-1)/2$ and then decreases, attaining maximum $(1+\sqrt5)/2$. Therefore a mean-based equation can have two positive roots for sample means between 1 and this maximum, or no root for sample means above the maximum. For sample means in $(0,1]$ it has one positive root. LR3-S remains identifiable through its origin density, but Proposition~\ref{prop:meanmono} and the unique moment-estimator conclusion do not apply. This example shows why rate and scale notation cannot be interchanged casually.

\section{Exact simulation algorithm and software details}
For LR3 at a specified parameter, generate independent random variables $U\sim\operatorname{Uniform}(0,1)$, $E\sim\Exp(a)$, and $G\sim\Gam(4,b)$, and return
\begin{equation}
 X=E\mathbf1\{U<w\}+G\mathbf1\{U\geq w\}.
\end{equation}
This is exact mixture sampling and requires no inverse-CDF approximation. The gamma variable can alternatively be obtained as the sum of four independent exponentials of rate $b$. The provided implementation draws all three vectors for each replication so that the order of random-number consumption is fixed.

The Python dependencies are NumPy, SciPy, and Matplotlib. The actual versions used for the reported experiment are recorded in \texttt{environment.json}. Running \texttt{python3 anc/reproduce.py --replications 500} regenerates the outputs. The full likelihood and observed-information formulas are implemented rather than approximating information with a numerically unstable second difference. Numerical finite differences are used only for independent implementation checks.

\bibliographystyle{plainnat}
\bibliography{references}

@article{lindley1958,
 author={Lindley, Dennis V.}, title={Fiducial Distributions and {Bayes}' Theorem},
 journal={Journal of the Royal Statistical Society: Series B (Methodological)},
 year={1958}, volume={20}, number={1}, pages={102--107}, doi={10.1111/j.2517-6161.1958.tb00278.x}}

@article{ghitany2008,
 author={Ghitany, M. E. and Atieh, B. and Nadarajah, S.},
 title={Lindley distribution and its application}, journal={Mathematics and Computers in Simulation},
 year={2008}, volume={78}, number={4}, pages={493--506}, doi={10.1016/j.matcom.2007.06.007}}

@article{shanker2016,
 author={Shanker, Rama}, title={Sujatha distribution and its applications},
 journal={Statistics in Transition New Series}, year={2016},volume={17},number={3},pages={391--410},
 url={https://sit.stat.gov.pl/Article/435}}

@article{mclachlan2019,
 author={McLachlan, Geoffrey J. and Lee, Sharon X. and Rathnayake, Suren I.},
 title={Finite Mixture Models},journal={Annual Review of Statistics and Its Application},
 year={2019},volume={6},pages={355--378},doi={10.1146/annurev-statistics-031017-100325}}

@article{sakthivel2023,
 author={Sakthivel, K. M. and {Vidhya G}},
 title={A New Finite Mixture of Probability Models with Application},
 journal={Reliability: Theory \& Applications},year={2023},volume={18},number={3},pages={135--153},
 url={https://gnedenko.net/Journal/2023/032023/RTA_3_2023-09.pdf}}

@article{dempster1977,
 author={Dempster, A. P. and Laird, N. M. and Rubin, D. B.},
 title={Maximum Likelihood from Incomplete Data via the {EM} Algorithm},
 journal={Journal of the Royal Statistical Society: Series B (Methodological)},
 year={1977},volume={39},number={1},pages={1--22},doi={10.1111/j.2517-6161.1977.tb01600.x}}

@article{cheng1983,
 author={Cheng, R. C. H. and Amin, N. A. K.},
 title={Estimating Parameters in Continuous Univariate Distributions with a Shifted Origin},
 journal={Journal of the Royal Statistical Society: Series B (Methodological)},
 year={1983},volume={45},number={3},pages={394--403},doi={10.1111/j.2517-6161.1983.tb01268.x}}

@book{vandervaart1998,
 author={van der Vaart, A. W.}, title={Asymptotic Statistics},
 publisher={Cambridge University Press},address={Cambridge},year={1998},doi={10.1017/CBO9780511802256}}

@book{cox1984,
 author={Cox, D. R. and Oakes, D.},title={Analysis of Survival Data},
 publisher={Chapman and Hall},address={London},year={1984}}

@article{akaike1974,
 author={Akaike, Hirotugu},title={A new look at the statistical model identification},
 journal={IEEE Transactions on Automatic Control},year={1974},volume={19},number={6},
 pages={716--723},doi={10.1109/TAC.1974.1100705}}

@article{ghitany2013,
 author={Ghitany, M. E. and Al-Mutairi, D. K. and Balakrishnan, N. and Al-Enezi, L. J.},
 title={Power Lindley distribution and associated inference},
 journal={Computational Statistics \& Data Analysis},year={2013},volume={64},pages={20--33},
 doi={10.1016/j.csda.2013.02.026}}

@article{sen2016,
 author={Sen, Subhradev and Maiti, Sudhansu S. and Chandra, N.},
 title={The xgamma Distribution: Statistical Properties and Application},
 journal={Journal of Modern Applied Statistical Methods},year={2016},volume={15},number={1},pages={774--788},
 doi={10.22237/jmasm/1462077420}}
\end{document}